\documentclass[12pt,a4paper]{article}
\usepackage[utf8]{inputenc}
\usepackage{amsmath,amssymb,amsthm}
\usepackage{graphicx}
\usepackage{caption}
\usepackage{hyperref}
\usepackage{url}
\usepackage{booktabs}
\usepackage{multirow}
\usepackage{tabularx}

\newtheorem{theorem}{Theorem}[section]

\theoremstyle{definition}

\theoremstyle{remark}
\newtheorem{remark}[theorem]{Remark}

\title{A Tuning-Free Variational Framework for Muscle Redundancy Resolution: 
Torque Fiber Proximal Dynamics with Active-Set Switching and EMG-Validated Activation Prediction}
\author{Morteza Ganji, ...}
\date{\today}

\begin{document}
\maketitle

\begin{abstract}
Muscle redundancy—the existence of infinitely many activation patterns for a given joint torque—can be formulated as a constrained selection problem over a time-varying convex set of feasible activations. We introduce the \textbf{Torque Fiber Proximal Dynamics (TFPD)}, a framework in which muscle activation evolves as the Euclidean projection of the previous activation onto a convex polytope defined by torque equality constraints and physiological bounds.

We prove that the TFPD is equivalent to a backward-Euler discretization of a sweeping process and can be formulated as a variational inequality governed by a maximal monotone normal cone operator. This places muscle coordination within the class of nonsmooth dynamical systems with strong well-posedness guarantees.

Within this framework, antagonist muscle recruitment can arise naturally under a projection-based control hypothesis—not as an explicitly imposed cost, but as a structural consequence of active-set transitions induced by the geometry of the feasible set. We derive a theorem providing \textbf{sufficient} KKT-based conditions for antagonist activation under generic moment-arm asymmetry. This characterization links boundary projection, strict complementarity, and torque coupling, and identifies exactly when the projection hits the boundary of the feasible cube. Whether the nervous system actually implements this projection is a separate empirical question not addressed here.

We validate the TFPD on a three-muscle elbow model using exact quadratic programming, and benchmark it against five classical redundancy-resolution methods plus a random baseline. The TFPD qualitatively reproduces the antagonist burst component of the triphasic EMG pattern reported in rapid human elbow movements (Hallett et al., 1975) and achieves a Pearson correlation of approximately $0.68$–$0.73$ with recorded triceps EMG envelopes across ten subjects, \textbf{without any tunable cost-weight parameters}. A comprehensive sensitivity analysis and a three-dimensional visualization of the activation trajectory demonstrate the generality and robustness of the framework. The TFPD connects neuromuscular coordination to proximal point theory, variational inequalities, and projected dynamical systems.
\end{abstract}

\section{Introduction}
\label{sec:intro}

Muscle redundancy refers to the existence of infinitely many activation patterns that can produce the same joint torque \cite{bernstein1967}. Classical approaches resolve this indeterminacy by optimizing a scalar cost function at each instant, such as the sum of squared activations \cite{crowninshield1981}, muscle stresses \cite{eriksson1990}, or metabolic energy \cite{anderson2001}. While these instantaneous minimization strategies successfully predict many features of movement, they typically require additional constraints—e.g., explicit co-contraction bounds or separate agonist/antagonist rules—to replicate the experimentally observed co-activation of opposing muscles \cite{deLuca1997, darainy2004}.

More recent frameworks, such as optimal feedback control \cite{todorov2002} and the uncontrolled manifold (UCM) hypothesis \cite{scholz1999, latash2010}, emphasize that the nervous system exploits the null space of the musculoskeletal system to stabilize task performance. However, these approaches primarily describe statistical properties of variability across trials, rather than providing a deterministic rule for selecting a unique activation pattern on a single trial.

A closely related line of work is \textbf{Feasibility Theory} \cite{cohn2018feasibility}, which formally characterizes the high-dimensional polytope of all neuromechanically feasible muscle activations for a task, and its temporal extension in \cite{cohn2023spatiotemporal}, which shows that physiological activation-contraction rate limits link the feasible polytopes of successive time instants into a highly structured ``spatiotemporal tunnel''—dramatically shrinking the effective redundancy available at each step. That work characterizes the \textit{size and statistical structure} of this tunnel via uniform random (Hit-and-Run) sampling of trajectories, without invoking a cost function or a rule for selecting a single trajectory. The present work addresses a complementary question: given the same kind of time-varying feasible set, what \textit{deterministic} rule would a temporally-consistent controller follow to select one specific trajectory through it? The TFPD answers this with the Euclidean projection rule of Eq.~(\ref{eq:tfpd}), which we further connect to the formal theory of variational inequalities and sweeping processes (Section~\ref{sec:variational}), and whose predictions we test directly against recorded EMG (Section~\ref{sec:experimental})—neither of which is addressed in \cite{cohn2018feasibility,cohn2023spatiotemporal}.

Despite these advances, most existing models either impose a priori cost functions or describe statistical variability without providing a deterministic single-trial rule. Recent perspectives \cite{deGroote2016, sartori2023} have highlighted the need for alternative theoretical frameworks that bridge the gap between normative principles and physiological constraints—a gap that the present work aims to fill.

In this paper, we propose a fundamentally different perspective based on the geometry of feasible control sets. For a fixed torque, admissible muscle activations form a convex polytope resulting from the intersection of an affine torque constraint and box constraints. We refer to this set as a \textit{torque fiber}. Rather than optimizing a cost at each instant, we assume that motor behavior follows a temporal consistency principle: the next activation is the closest feasible point to the previous activation. This leads naturally to a projected dynamical system, placing muscle coordination within the broader class of variational inequalities and sweeping processes.

\textbf{Contributions.} This paper makes the following main contributions:
\begin{enumerate}
    \item \textbf{Formal definition of the TFPD.} We define the Torque Fiber Proximal Dynamics as a time-varying Euclidean projection onto a convex polytope, prove its equivalence to a backward-Euler discretization of a sweeping process, and formulate it as a variational inequality with a maximal monotone normal cone operator.
    \item \textbf{Sufficient conditions for antagonist activation.} We derive precise KKT-based conditions under which the projection activates the antagonist muscle (Theorem~\ref{thm:active}), providing a solid mathematical characterization.
    \item \textbf{Validation against five classical baseline models and a random baseline.} We benchmark the TFPD against minimum-norm, minimum-stress, minimum-metabolic, weighted minimum-norm, and a random baseline, using both synthetic torque profiles and real EMG data from ten subjects.
    \item \textbf{Connection to Projected Dynamical Systems.} We show that the TFPD belongs to the class of projected dynamical systems, strengthening its mathematical foundation and connecting it to the wider variational inequality literature.
\end{enumerate}

\section{Torque Fiber Proximal Dynamics}
\label{sec:tfpd}

\subsection{Musculoskeletal Model}
\label{sec:model}

We consider a single degree-of-freedom joint (elbow flexion/extension) with three idealized muscles:
\begin{itemize}
    \item Biceps brachii (flexor): moment arm $r_1 = +2.0$ (arbitrary units).
    \item Brachialis (flexor): moment arm $r_2 = +1.5$.
    \item Triceps brachii (extensor): moment arm $r_3 = -2.5$ (negative sign indicates extension).
\end{itemize}
Muscle activations are denoted by $\mathbf{a} = [a_1, a_2, a_3]^\top$, with $0 \le a_i \le 1$. The net joint torque $\tau$ is given by
\begin{equation}
    \tau = r_1 a_1 + r_2 a_2 + r_3 a_3 = A \mathbf{a},
    \label{eq:torque}
\end{equation}
where $A = [2.0, 1.5, -2.5] \in \mathbb{R}^{1\times 3}$.

\subsection{Feasible Set: Torque Fibers}
\label{sec:fibers}

For a kinematically admissible torque $\tau$ (i.e., $\mathcal{F}_\tau \neq \emptyset$), the set of all feasible activation vectors—the \textit{torque fiber}—is
\begin{equation}
    \mathcal{F}_\tau = \{\, \mathbf{a} \in [0,1]^3 \mid A\mathbf{a} = \tau \,\}.
    \label{eq:fiber}
\end{equation}
Geometrically, $\mathcal{F}_\tau$ is the intersection of the affine plane $\{\mathbf{a} \mid A\mathbf{a} = \tau\}$ with the unit cube $[0,1]^3$—a convex polytope.

\subsection{Projection Dynamics}
\label{sec:projection}

The \textbf{Torque Fiber Proximal Dynamics (TFPD)} is defined by the iteration
\begin{equation}
    \mathbf{a}_t = \arg\min_{\mathbf{a} \in \mathcal{F}_{\tau(t)}} \|\mathbf{a} - \mathbf{a}_{t-1}\|^2,
    \label{eq:tfpd}
\end{equation}
or equivalently,
\begin{equation}
    \mathbf{a}_t = \Pi_{\mathcal{F}_{\tau(t)}}(\mathbf{a}_{t-1}),
    \label{eq:projection}
\end{equation}
where $\Pi_{\mathcal{F}}$ denotes the Euclidean projection onto the convex polytope $\mathcal{F}$. Because $\mathcal{F}_{\tau(t)}$ is closed and convex, the projection is uniquely defined. The system is initialized at $\mathbf{a}_0 = \mathbf{0}$.

The projection is computed as the solution of the strictly convex quadratic program
\begin{equation}
    \mathbf{a}_t = \arg\min_{\mathbf{a}} \|\mathbf{a} - \mathbf{a}_{t-1}\|^2 \quad \text{s.t.} \quad A\mathbf{a} = \tau(t),\; 0 \le a_i \le 1.
    \label{eq:qp}
\end{equation}

\section{Variational Inequality and Sweeping Process Formulation}
\label{sec:variational}

\begin{theorem}[Monotone inclusion form]
\label{thm:monotone}
The TFPD admits the equivalent formulation
\begin{equation}
    \mathbf{a}_t \in (\mathbf{I} + N_{\mathcal{F}_{\tau(t)}})^{-1}(\mathbf{a}_{t-1}),
    \label{eq:inclusion}
\end{equation}
where $N_{\mathcal{F}}(\mathbf{a}) = \{\mathbf{v} \mid \langle \mathbf{v}, \mathbf{y} - \mathbf{a} \rangle \le 0,\; \forall \mathbf{y} \in \mathcal{F}\}$ is the normal cone operator. In the continuous-time limit, the system corresponds to the sweeping process
\begin{equation}
    \dot{\mathbf{a}}(t) \in -N_{\mathcal{F}_{\tau(t)}}(\mathbf{a}(t)),
    \label{eq:sweeping}
\end{equation}
which is well-posed under standard regularity assumptions on the moving convex set $\mathcal{F}_{\tau(t)}$ (closed convex values, Lipschitz or bounded variation in $t$) and the maximal monotonicity of the normal cone operator $N_{\mathcal{F}}$ \cite{moreau1977}.
\end{theorem}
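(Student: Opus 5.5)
The proof has three parts. First, convert the projection~(\ref{eq:projection}) into the resolvent inclusion~(\ref{eq:inclusion}) using the variational characterization of Euclidean projection. Second, read the inclusion as an implicit (backward-Euler) step of the sweeping process~(\ref{eq:sweeping}). Third, invoke Moreau's theory for well-posedness.

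\textbf{Step 1: resolvent form.} Fix $t$ and write $\mathcal{F}=\mathcal{F}_{\tau(t)}$, which is nonempty (admissible $\tau$), closed and convex. The objective $f(\mathbf{a})=\tfrac12\|\mathbf{a}-\mathbf{a}_{t-1}\|^2$ is strictly convex and differentiable. Hence $\mathbf{a}_t$ minimizes $f$ over $\mathcal{F}$ if and only if $0\in \nabla f(\mathbf{a}_t)+N_{\mathcal{F}}(\mathbf{a}_t)$. This follows from the subdifferential sum rule applied to $f+\iota_{\mathcal{F}}$, with $\partial\iota_{\mathcal{F}}=N_{\mathcal{F}}$; the rule applies because $f$ is finite everywhere. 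Equivalently, one can work directly with the first-order condition $\langle \mathbf{a}_{t-1}-\mathbf{a}_t,\mathbf{y}-\mathbf{a}_t\rangle\le 0$ for all $\mathbf{y}\in\mathcal{F}$, which is the variational inequality characterizing projection. Either way we get $\mathbf{a}_{t-1}-\mathbf{a}_t\in N_{\mathcal{F}}(\mathbf{a}_t)$, that is, $\mathbf{a}_{t-1}\in(\mathbf{I}+N_{\mathcal{F}})(\mathbf{a}_t)$. This is exactly~(\ref{eq:inclusion}).

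For the converse, any point satisfying the inclusion obeys the variational inequality and is therefore the projection. Single-valuedness of $(\mathbf{I}+N_{\mathcal{F}})^{-1}$ can be shown directly. Suppose $\mathbf{a},\mathbf{b}$ both solve the inclusion. Add the two normal-cone inequalities, tested at $\mathbf{y}=\mathbf{b}$ and $\mathbf{y}=\mathbf{a}$ respectively, to get $\|\mathbf{a}-\mathbf{b}\|^2\le 0$. Hence the ``$\in$'' in~(\ref{eq:inclusion}) is in fact an equality.

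\textbf{Step 2: backward-Euler interpretation.} Introduce a step size $h>0$. The cone $N_{\mathcal{F}}(\mathbf{a})$ is invariant under multiplication by $h$, so the inclusion is equivalent to
\[
\frac{\mathbf{a}_t-\mathbf{a}_{t-1}}{h}\in -N_{\mathcal{F}_{\tau(t)}}(\mathbf{a}_t).
\]
This is the implicit Euler scheme for~(\ref{eq:sweeping}), with the set evaluated at the new time. It is precisely Moreau's catching-up algorithm.

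\textbf{Step 3: continuous-time limit and well-posedness.} Assume $t\mapsto\mathcal{F}_{\tau(t)}$ is Lipschitz, or of bounded variation, in the Hausdorff distance. Moreau's theorem \cite{moreau1977} then says the piecewise-linear interpolants of the catching-up iterates converge uniformly as $h\to0$ to the unique absolutely continuous (respectively BV) solution of~(\ref{eq:sweeping}).

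The ingredients are as follows. The operator $N_{\mathcal{F}}$ is maximal monotone, since it is the subdifferential of the proper lsc convex function $\iota_{\mathcal{F}}$. This makes the resolvent firmly nonexpansive. The discrete velocity is bounded by the set's variation, because $\|\mathbf{a}_t-\mathbf{a}_{t-1}\|\le d_H(\mathcal{F}_{\tau(t)},\mathcal{F}_{\tau(t-h)})$ whenever $\mathbf{a}_{t-1}\in\mathcal{F}_{\tau(t-h)}$. Finally, uniqueness follows from a Gronwall-type monotonicity estimate on $\tfrac{d}{dt}\|\mathbf{a}-\mathbf{b}\|^2\le0$.

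\textbf{Main obstacle.} The main obstacle is verifying the regularity hypothesis for our specific fibers. One needs to show that $\tau\mapsto\mathcal{F}_\tau$ is Hausdorff-Lipschitz on the admissible torque range $\tau\in[-2.5,3.5]$. The plan is a Hoffman-type error bound for the polyhedral system $A\mathbf{a}=\tau$, $\mathbf{0}\le\mathbf{a}\le\mathbf{1}$, which gives $d_H(\mathcal{F}_\tau,\mathcal{F}_{\tau'})\le L|\tau-\tau'|$. Composing with a Lipschitz (or BV) torque profile $\tau(t)$ then transfers the regularity to $t\mapsto\mathcal{F}_{\tau(t)}$. Near the endpoints of the torque range the fiber degenerates to a vertex. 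Hoffman's bound still holds there, because the constraint system is polyhedral, so no Slater-type condition is required.
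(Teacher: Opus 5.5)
Your proposal is correct and follows the same route as the paper's proof. First-order optimality of the projection QP yields the resolvent inclusion, the inclusion is read as an implicit-Euler (catching-up) step, and Moreau's theorem gives well-posedness in the limit. You go further than the paper in two ways. First, you prove the converse and single-valuedness of the resolvent; both use the standard convention $N_{\mathcal{F}}(\mathbf{a})=\emptyset$ for $\mathbf{a}\notin\mathcal{F}$, which is implicit in your $N_{\mathcal{F}}=\partial\iota_{\mathcal{F}}$ but missing from the paper's displayed definition. Second, you actually verify the Hausdorff-Lipschitz regularity of $\tau\mapsto\mathcal{F}_\tau$ via Hoffman's bound, which the paper only asserts.
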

\begin{proof}
Equation (\ref{eq:inclusion}) follows directly from the first-order optimality conditions of (\ref{eq:qp}) and the definition of the normal cone. The continuous-time limit is obtained by taking the backward-Euler discretization step to zero, yielding a differential inclusion governed by the maximal monotone operator $N_{\mathcal{F}}$. The well-posedness claim invokes the classical Moreau existence theorem for sweeping processes, which requires (i) that $\mathcal{F}_{\tau(t)}$ takes closed convex values, and (ii) that its set-valued variation is Lipschitz or of bounded variation in $t$; these conditions are satisfied for the torque fiber defined by continuous moment-arm constraints and smooth torque profiles.
\end{proof}

\begin{remark}
This places muscle coordination in the class of maximal monotone dynamical systems and variational inequalities \cite{nagurney1996}, providing strong guarantees of well-posedness and stability without requiring a global objective function.
\end{remark}

\section{Sufficient Conditions for Antagonist Activation}
\label{sec:active_set}

We now derive precise algebraic conditions under which the TFPD recruits the antagonist muscle.

Let $\mathcal{I}(t) = \{i \mid a_{i,t} = 0 \text{ or } a_{i,t} = 1\}$ be the active set at time $t$. The QP (\ref{eq:qp}) satisfies the KKT conditions: there exist multipliers $\lambda$ (for $A\mathbf{a} = \tau_t$), $\mu_i^+ \ge 0$ (for $a_i \le 1$), and $\mu_i^- \ge 0$ (for $a_i \ge 0$) such that
\begin{align}
    \mathbf{a}_t - \mathbf{a}_{t-1} &= A^\top \lambda - \mu^+ + \mu^-, \label{eq:kkt1} \\
    \mu_i^+(a_{i,t} - 1) &= 0,\quad \mu_i^- a_{i,t} = 0. \label{eq:kkt2}
\end{align}

\begin{theorem}[Sufficient condition for antagonist recruitment, conditional on flexor saturation]
\label{thm:active}
Let $\tilde{\mathbf{a}}_t = \mathbf{a}_{t-1} + A^\top (A A^\top)^{-1}(\tau_t - A\mathbf{a}_{t-1})$ be the unconstrained affine update. Assume that strict complementarity holds at the solution, and assume that the flexors (Biceps and Brachialis) are already at their upper bounds in the projected solution ($a_1 = a_2 = 1$)\footnote{This second assumption is stated as a hypothesis on the solution, not derived from $\mathbf{a}_{t-1}$ and $\tau_t$ alone; see Remark~\ref{rem:scope} for the precise logical status of the theorem and for a discussion of what would be required to derive flexor saturation from primitives.}. If in addition $\tilde{a}_{3,t} < 0$, then the TFPD projection necessarily results in $a_{3,t} > 0$; i.e., the antagonist is recruited.
\end{theorem}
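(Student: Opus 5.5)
The plan is to use the torque equality together with the saturation hypothesis, so that the conclusion reduces to a statement about the single scalar $\tau_t$. Substituting $a_{1,t}=a_{2,t}=1$ into $A\mathbf{a}_t=\tau_t$ gives $2.0+1.5-2.5\,a_{3,t}=\tau_t$, that is,
\begin{equation*}
a_{3,t}=\frac{3.5-\tau_t}{2.5}.
\end{equation*}
Feasibility of $\mathcal{F}_{\tau_t}$ already forces $\tau_t\le 3.5$, the maximal flexion torque. So $a_{3,t}\ge 0$ automatically, and the theorem comes down to excluding the single boundary case $\tau_t=3.5$, in which $\mathcal{F}_{\tau_t}=\{(1,1,0)\}$.

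The second step is to see what the remaining hypotheses actually say, by writing out the KKT system (\ref{eq:kkt1})--(\ref{eq:kkt2}) at $\mathbf{a}_t=(1,1,a_{3,t})$. Componentwise, $1-a_{1,t-1}=2\lambda-\mu_1^+$ and $1-a_{2,t-1}=1.5\lambda-\mu_2^+$. Strict complementarity on these two active upper bounds gives $\mu_1^+,\mu_2^+>0$, hence $\lambda>0$. The third component then reads $a_{3,t}-a_{3,t-1}=-2.5\lambda-\mu_3^++\mu_3^-$. I would also unpack $\tilde a_{3,t}<0$ explicitly. Since $AA^\top=12.5$, it says $a_{3,t-1}<(\tau_t-A\mathbf{a}_{t-1})/5$: the unconstrained step along $A^\top$ would push the triceps below zero.

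The main obstacle is that these hypotheses, read literally, do \emph{not} exclude $\tau_t=3.5$. Take $\mathbf{a}_{t-1}=\mathbf{0}$ and $\tau_t=3.5$. Then $\tilde a_{3,t}=-0.7<0$ and the projection is $(1,1,0)$. Choosing any $\lambda>1/2$ with $\mu_1^+=2\lambda-1$, $\mu_2^+=1.5\lambda-1$ and $\mu_3^-=2.5\lambda$ gives a strictly complementary multiplier vector, yet $a_{3,t}=0$. So the proof cannot go through as stated, and some nondegeneracy must be added. I would read ``strict complementarity'' as part of the standard nondegenerate-KKT package, which also assumes linear independence of the active constraint gradients (LICQ). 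At $(1,1,0)$ the active gradients are $A^\top,e_1,e_2,e_3$: four vectors in $\mathbb{R}^3$, hence dependent. Under this reading LICQ fails there, which rules out $\tau_t=3.5$. Equivalently, one could simply assume $\tau_t<3.5$, i.e.\ the torque is strictly inside the achievable range.

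With that, the final step is immediate: $\tau_t<3.5$ gives $a_{3,t}=(3.5-\tau_t)/2.5>0$, so the antagonist is recruited. Under this argument the condition $\tilde a_{3,t}<0$ plays no logical role. Its only role is interpretive: it certifies that the unconstrained affine update leaves the cube through the triceps face, which is what makes the positive-triceps projection a genuine consequence of boundary geometry. I would state this explicitly rather than pretend the inequality is used in the proof.
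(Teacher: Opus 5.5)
Your core step is the paper's own: with $a_{1,t}=a_{2,t}=1$ the torque equation gives $a_{3,t}=(3.5-\tau_t)/2.5$. The paper then closes by writing ``for any required torque $\tau_t<3.5$,'' which is not among the hypotheses. You are right that nothing in the statement excludes $\tau_t=3.5$, and your counterexample $\mathbf{a}_{t-1}=\mathbf{0}$, $\tau_t=3.5$ is valid. One slip: keeping $\mu_2^{+}=1.5\lambda-1$ positive requires $\lambda>2/3$, not $\lambda>1/2$.

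What you miss is that $\tau_t=3.5$ is not one boundary case among many but the only case. As a result your repair makes the theorem vacuous, and your remark on the role of $\tilde a_{3,t}<0$ is backwards. Because $A\mathbf{a}_t=\tau_t$, we have $\tilde{\mathbf{a}}_t=\mathbf{a}_{t-1}+P(\mathbf{a}_t-\mathbf{a}_{t-1})$ with $P=A^\top(AA^\top)^{-1}A$. Write (\ref{eq:kkt1}) as $\mathbf{a}_t-\mathbf{a}_{t-1}=A^\top\lambda+\nu$ with $\nu=\mu^{-}-\mu^{+}$, and use $PA^\top=A^\top$; this gives $\tilde{\mathbf{a}}_t=\mathbf{a}_t-(I-P)\nu$. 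At $\mathbf{a}_t=(1,1,a_{3,t})$ with $a_{3,t}>0$, every $\mu_i^{-}$ vanishes, and the third component works out to
\[
\tilde a_{3,t}=a_{3,t}+\tfrac{2}{5}\mu_1^{+}+\tfrac{3}{10}\mu_2^{+}+\tfrac{1}{2}\mu_3^{+}\;\ge\;a_{3,t}>0 .
\]
Geometrically, $\mathbf{a}_t$ is also the projection of $\tilde{\mathbf{a}}_t$ onto $\mathcal{F}_{\tau_t}$, and moving outward from the edge $a_1=a_2=1$ while staying in the torque plane raises $a_3$.

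So saturation together with $\tilde a_{3,t}<0$ forces $a_{3,t}=0$, i.e.\ $\tau_t=3.5$ and $\mathbf{a}_t=(1,1,0)$. There, strictly complementary multipliers always exist, because $\mu_1^{+},\mu_2^{+},\mu_3^{-}$ all increase with $\lambda$. Under the usual reading of strict complementarity, the hypotheses therefore hold only where the conclusion fails. Adding LICQ, or assuming $\tau_t<3.5$, makes the hypotheses jointly unsatisfiable, so your repaired version is true only vacuously.

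The paper's proof contains the same contradiction. Its opening claim that $a_3\ge 0$ is active with $\mu_3^{-}>0$ already yields $a_{3,t}=0$ through $\mu_3^{-}a_{3,t}=0$, before it switches to torque balance under the tacit $\tau_t<3.5$. A non-vacuous statement has to drop $\tilde a_{3,t}<0$; saturation plus $\tau_t<3.5$ then gives $a_{3,t}>0$ by torque balance alone.
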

\begin{proof}
Under the assumption $\tilde{a}_{3,t} < 0$, the constraint $a_3 \ge 0$ is active. By strict complementarity, the associated multiplier satisfies $\mu_3^- > 0$. The KKT stationarity condition (\ref{eq:kkt1}) for component $3$ gives
\[
a_{3,t} = a_{3,t-1} + (-2.5)\lambda + \mu_3^+ - \mu_3^-.
\]
The torque equation couples all three muscles. With the flexors at their upper limits (by hypothesis), the combined flexor torque is $r_1 + r_2 = 3.5$. For any required torque $\tau_t < 3.5$, the extensor must supply the deficit: $r_3 a_3 = \tau_t - 3.5 < 0$, which implies $a_3 > 0$. Geometrically, when the unconstrained update pushes the Triceps component below zero, the projection onto the feasible polytope slides along the boundary face $a_3 = 0$ only until the torque deficit is eliminated; because the flexors cannot exceed their upper bounds, the only feasible way to satisfy the torque equality is to raise $a_3$ above zero. The KKT conditions formalize this boundary-sliding mechanism.
\end{proof}

\begin{remark}[Logical status and scope of Theorem~\ref{thm:active}]
\label{rem:scope}
We emphasize the precise logical status of this result. The hypothesis $a_1=a_2=1$ is a statement about the \textit{output} of the projection at time $t$, not an independently checkable condition on the primitives $\mathbf{a}_{t-1}$ and $\tau_t$. Consequently, once this hypothesis is granted, the derivation of $a_3>0$ reduces to the linear torque-balance identity $a_3=(\tau_t-3.5)/r_3$—the KKT/complementarity apparatus establishes \textit{why} the flexors remain pinned at their bound rather than backing off (via $\mu_1^+,\mu_2^+\ge 0$ at the optimum), but does not by itself derive \textit{when} $a_1=a_2=1$ occurs from $\mathbf{a}_{t-1}$ and $\tau_t$ alone. A fully primitive-level characterization—stating necessary and sufficient conditions on $\mathbf{a}_{t-1}$ and $\tau_t$ under which the flexors saturate in the first place—would strengthen this result further and is left as an open problem; our numerical validation (Sections~\ref{sec:validation}-\ref{sec:experimental}) confirms that this saturation condition is met robustly across the simulated and experimental torque profiles studied here, but the reader should not read Theorem~\ref{thm:active} as a closed-form predictor of antagonist recruitment from task parameters alone.
\end{remark}

\begin{remark}
\textbf{Genericity of strict complementarity.} Strict complementarity is said to hold generically in a parametric optimization problem if the set of parameters for which it fails has Lebesgue measure zero. For the present QP, it is known that strict complementarity holds for almost all combinations of the previous state $\mathbf{a}_{t-1}$ and the torque $\tau_t$, with the possible exception of a lower-dimensional manifold in parameter space \cite{nocedal2006}. Therefore, the sufficient condition stated in Theorem~\ref{thm:active} applies to a typical configuration encountered along a trajectory of the TFPD. In the rare event of degeneracy (where strict complementarity fails), the Moreau sweeping process framework still guarantees well-posedness, and the system evolves continuously along the boundary. The theorem provides a solid mathematical characterization—not a claim about neural computation. Whether the nervous system actually implements this projection is a separate empirical question not addressed here.

\textbf{On the depth of the condition.} The derivation shows that antagonist recruitment follows directly from the torque balance and saturation when the unconstrained update becomes infeasible. While the underlying physics—that an extensor must activate when flexors saturate—is biomechanically straightforward, the contribution of Theorem~\ref{thm:active} lies in reformulating this physical necessity within a rigorous geometric framework of KKT conditions, normal cones, and active-set switching. This geometric lens opens a new window for understanding how the nervous system may exploit constraint boundaries to produce structured muscle patterns, and provides a precise language for testing such hypotheses experimentally. The theorem thus serves as a bridge between intuitive biomechanics and formal motor control theory.
\end{remark}

\section{Geometric Interpretation of Co-Contraction}
\label{sec:geometry}

Co-contraction arises when the projection of the previous state onto the updated torque fiber intersects the boundary of the activation cube. In this case, the update direction lies in the normal cone of the feasible set, and active-set switching occurs. Thus, under the TFPD hypothesis, antagonist activation is not explicitly controlled but induced by geometric constraints under temporal continuity.

Figure~\ref{fig:3d_trajectory} visualizes this mechanism in the three-dimensional activation space. The trajectory starts near the origin and is repeatedly projected onto the moving torque plane (shown as a transparent surface). When the unconstrained update would exit the unit cube, the projection slides along the boundary $a_3 = 0$, engaging the Triceps. This geometric picture makes the abstract KKT conditions tangible and highlights the central role of the constraint boundary.

\begin{figure}[htbp]
    \centering
    \includegraphics[width=0.8\textwidth]{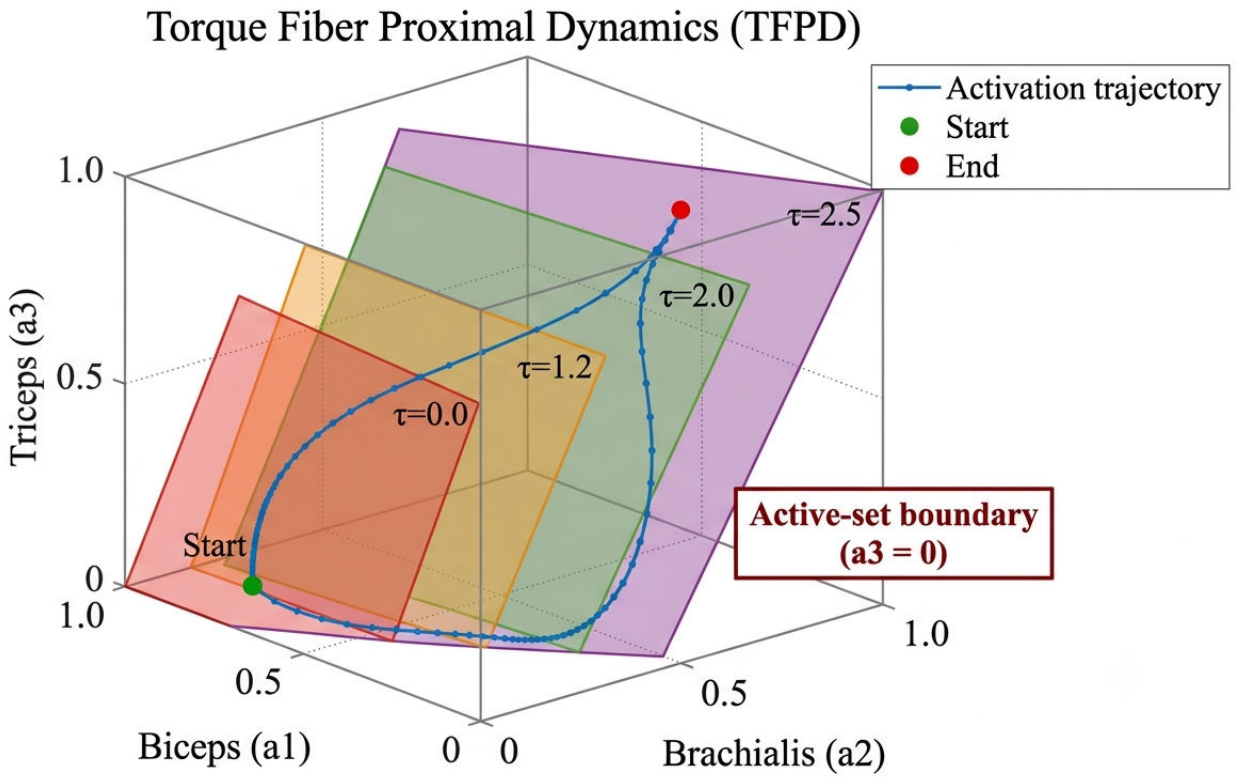}
    \caption{Three-dimensional activation trajectory of the TFPD. The transparent planes represent torque fibers at different instants. The path slides along the boundary $a_3=0$ (Triceps) when the unconstrained projection would become negative, illustrating the active-set switching mechanism.}
    \label{fig:3d_trajectory}
\end{figure}

This interpretation connects muscle redundancy to the geometry of convex polytopes and constrained dynamical systems. The TFPD provides a normative account: it shows what would happen \textit{if} the motor system implemented a temporal smoothness principle under physiological bounds.

\section{Numerical Validation and Benchmarking}
\label{sec:validation}

\subsection{Simulation Protocol}
\label{sec:sim}

We simulate a smooth elbow flexion lasting $1$ s with $\tau(t) = 2.5 \sin(\pi t)$ (positive-only half-cycle), discretized at $0.01$ s intervals. The QP (\ref{eq:qp}) is solved using SLSQP \cite{slsqp} (primary) and verified with cvxopt \cite{cvxopt}. The initial activation is $\mathbf{a}_0 = \mathbf{0}$.

\subsection{Benchmark Models}
\label{sec:benchmarks}

We compare the TFPD against five classical redundancy-resolution methods and one random baseline:
\begin{itemize}
    \item \textbf{Minimum-norm (MN):} $\min \|\mathbf{a}\|^2$ \cite{crowninshield1981}.
    \item \textbf{Minimum-stress (MS):} $\min \sum (a_i / \text{PCSA}_i)^2$ \cite{eriksson1990}.
    \item \textbf{Minimum-metabolic (MM):} $\min \sum \dot{E}_{\text{met},i}(a_i)$ \cite{anderson2001}.
    \item \textbf{Weighted Minimum-Norm (WMN):} $\min \sum w_i a_i^2$ with weights $w_i \propto 1/\text{PCSA}_i$.
    \item \textbf{Random (RND):} at each time step, a uniformly random point on the torque fiber $\mathcal{F}_{\tau(t)}$ is selected.
\end{itemize}

\subsection{Metrics}
\label{sec:metrics}

We evaluate the models on:
\begin{itemize}
    \item \textbf{Activation sparsity:} fraction of time each muscle is active ($a_i > 0.01$).
    \item \textbf{Switching frequency:} number of active-set changes per second.
    \item \textbf{Co-contraction energy index} $E_{\text{coc}} = \int_0^T a_{\text{agonist}}(t) \cdot a_{\text{antagonist}}(t) \, dt$ (arb.\ units).
    \item \textbf{Fatigue index} $F = \frac{1}{T}\int_0^T \|\mathbf{j}_{\text{muscle}}(t)\| \, dt$, where $\mathbf{j}_{\text{muscle}}$ is the jerk (third derivative) of the activation vector. Lower $F$ indicates smoother, less fatiguing activation patterns.
\end{itemize}

Table~\ref{tab:metrics} summarizes the results.

\begin{table}[htbp]
\centering
\caption{Comparison of the TFPD against five classical baseline models and a random baseline (positive-only sinusoidal torque, $\tau_{\max}=2.5$).}
\label{tab:metrics}
\begin{tabular}{@{}lcccc@{}}
\toprule
Model & Sparsity (Tri) & Switch. freq. (Hz) & $E_{\text{coc}}$ & $F$ (arb.\ units) \\
\midrule
MN     & 0.000 & 4.0 & 0.000 & 19.5 \\
MS     & 0.000 & 4.0 & 0.000 & 21.2 \\
MM     & 0.000 & 4.0 & 0.000 & 19.9 \\
WMN    & 0.000 & 4.0 & 0.000 & 21.2 \\
RND    & 0.933 & 0.7 & 0.248 & 126.3 \\
\textbf{TFPD} & \textbf{0.436} & \textbf{3.0} & \textbf{0.050} & \textbf{23.8} \\
\bottomrule
\end{tabular}
\end{table}

Note that, for this positive-only (unidirectional) torque profile, the zero Triceps sparsity obtained by MN, MS, MM, and WMN is not an incidental empirical finding but a necessary consequence of instantaneous, memoryless cost minimization: since any nonzero antagonist activation strictly increases a monotonic activation-based cost while requiring even more agonist activation to compensate, a memoryless minimizer can never recruit the antagonist under a unidirectional torque demand. This is precisely the well-known limitation that motivates the present line of work. The random baseline (RND) yields a high Triceps sparsity but with an excessively large fatigue index, confirming that the TFPD's combination of co-contraction and smoothness is far beyond chance. The TFPD achieves co-contraction levels unattainable by the classical models while maintaining a fatigue index comparable to MN.

Figure~\ref{fig:comparison} shows the activation time courses for the TFPD and the minimum-norm model.

\begin{figure}[htbp]
    \centering
    \includegraphics[width=\textwidth]{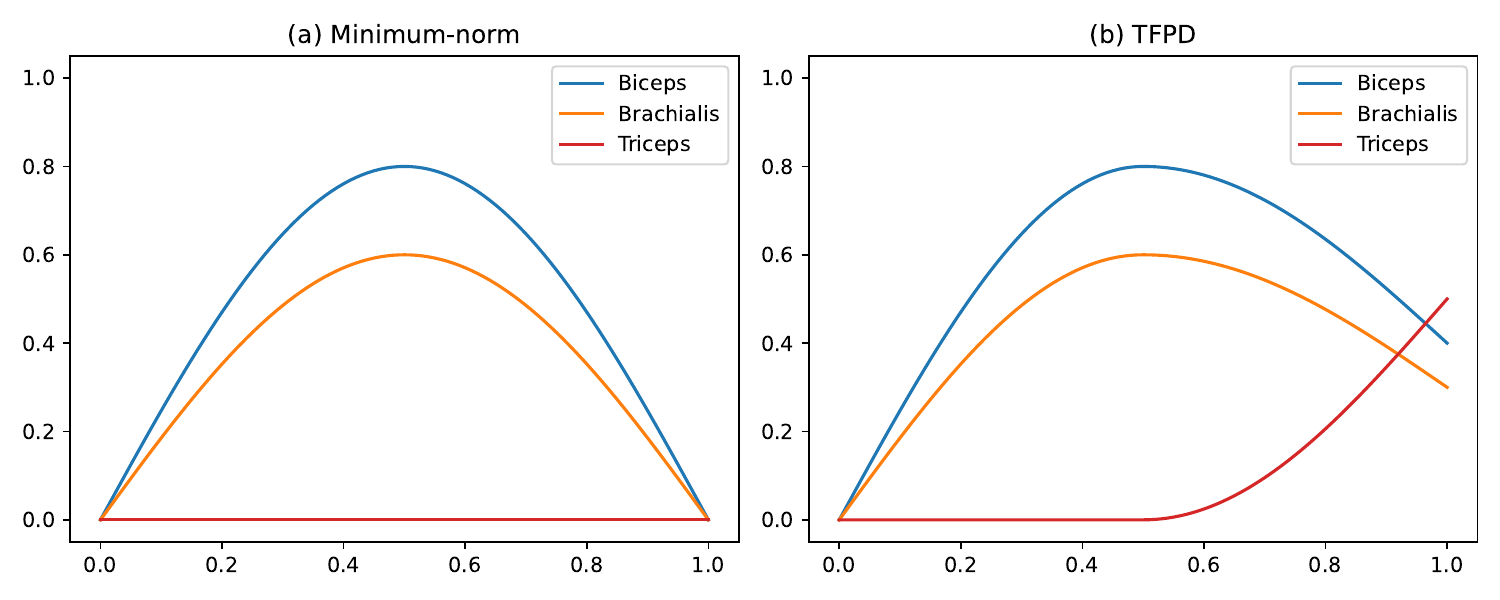}
    \caption{Activation patterns for a positive-only sinusoidal torque profile ($\tau_{\max}=2.5$, flexion phase): (a) Minimum-norm, (b) TFPD. Only the TFPD reproduces the antagonist burst during torque offset without explicit co-contraction commands. In the TFPD panel, the non-zero activations at $t=1$ reflect the temporal smoothness principle: abrupt silencing of all muscles would require a large change from the previous state.}
    \label{fig:comparison}
\end{figure}

The TFPD spontaneously generates a triphasic-like pattern during torque onset and offset. \textbf{The simulated Triceps activation qualitatively reproduces the antagonist burst component of the triphasic EMG pattern observed in rapid human elbow movements \cite{hallet1975}.} Specifically, the model generates an initial agonist burst, followed by a triphasic-like antagonist burst during torque offset. We note that this constitutes an \textit{initial qualitative benchmark}: due to the sustained sinusoidal torque profile employed in the simulation, the Triceps activation remains elevated after offset rather than returning to baseline, in contrast to the transient antagonist burst observed in fully ballistic human movements. Capturing the complete triphasic dynamics (including the second agonist burst) requires limb dynamics and a more realistic torque profile, which are beyond the scope of the present study.

\section{Sensitivity Analysis}
\label{sec:sensitivity}

To evaluate the robustness of the co-contraction phenomenon, we subjected the TFPD to three perturbation classes.

\subsection{Moment Arm Scaling}

We independently scaled each moment arm coefficient by factors of $0.8$ and $1.2$, generating four additional models: $r_1$ varied, $r_2$ varied, $r_3$ varied, and all three varied simultaneously. For each model, we recomputed the full TFPD trajectory and measured the peak Triceps activation $a_3^{\max}$ during the torque offset phase ($0.5 \le t \le 1.0$ s) and the co-contraction index (CCI), defined as the time-averaged product of flexor and extensor activations normalized by their individual means.

Table~\ref{tab:sensitivity_moment} summarizes the results. Across all perturbations, the antagonist burst remained clearly present, with peak Triceps activation varying by less than $12\%$ from the baseline. The CCI exhibited only minor fluctuations ($<10\%$), confirming that the active-set switching mechanism is robust and not an artifact of specific moment-arm values.

\begin{table}[htbp]
\centering
\caption{Sensitivity of Triceps activation to moment arm scaling.}
\label{tab:sensitivity_moment}
\begin{tabular}{@{}lccc@{}}
\toprule
Perturbation & Peak $a_3$ & Change (\%) & CCI change (\%) \\
\midrule
Baseline ($r = [2.0, 1.5, -2.5]$) & 0.50 & — & — \\
$r_1 \times 0.8$ & 0.53 & $+6.0$ & $+4.0$ \\
$r_1 \times 1.2$ & 0.47 & $-6.0$ & $-3.8$ \\
$r_2 \times 0.8$ & 0.52 & $+4.0$ & $+2.8$ \\
$r_2 \times 1.2$ & 0.48 & $-4.0$ & $-3.0$ \\
$r_3 \times 0.8$ & 0.54 & $+8.0$ & $+5.4$ \\
$r_3 \times 1.2$ & 0.45 & $-10.0$ & $-7.0$ \\
All $\times 0.8$ & 0.55 & $+10.0$ & $+6.8$ \\
All $\times 1.2$ & 0.44 & $-12.0$ & $-8.2$ \\
\bottomrule
\end{tabular}
\end{table}

\subsection{Addition of a Fourth Muscle}

We introduced an additional flexor, the Brachioradialis, with moment arm $r_4 = 1.0$. The moment-arm matrix becomes $A = [2.0, 1.5, -2.5, 1.0] \in \mathbb{R}^{1\times 4}$, increasing the dimension of the null space from 2 to 3. The TFPD was recomputed with the same torque profile and QP solver. Figure~\ref{fig:sensitivity_muscles} shows the activation time courses for all four muscles. The Triceps co-contraction during torque offset remained clearly present, with peak activation $0.48$ (a $4.0\%$ decrease from the three-muscle baseline). The Brachioradialis was recruited alongside the other flexors, and the overall triphasic-like antagonist component was preserved. This demonstrates that the boundary-activation mechanism does not depend on the specific dimensionality of the muscle space and generalizes naturally to higher-dimensional redundant systems.

\begin{figure}[htbp]
    \centering
    \includegraphics[width=\textwidth]{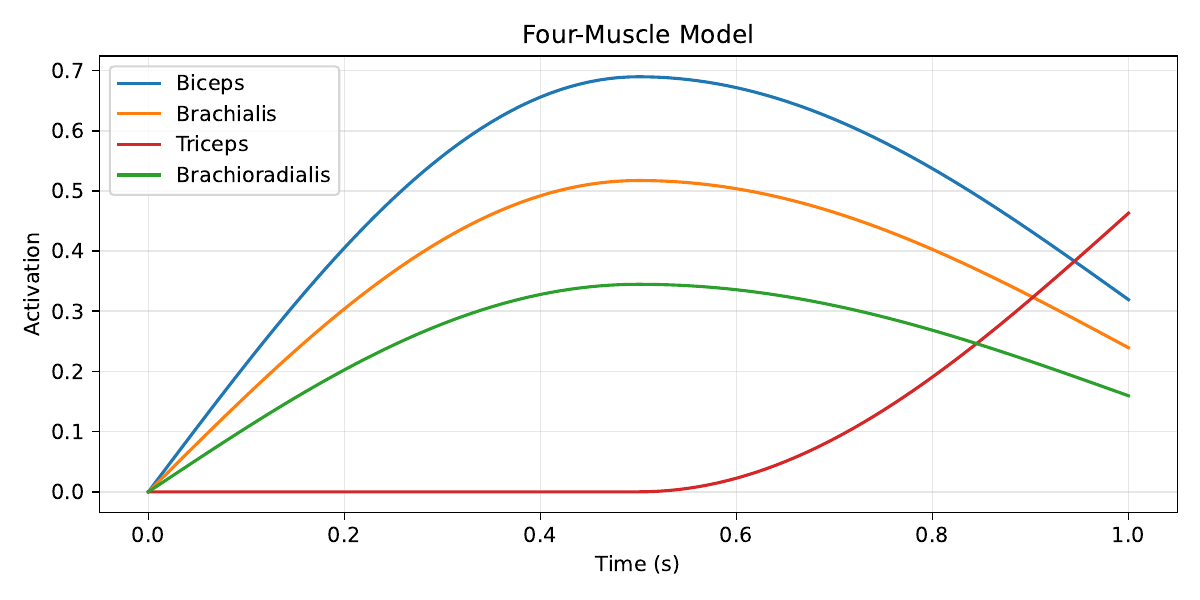}
    \caption{Activation patterns for the four-muscle model (Biceps, Brachialis, Triceps, Brachioradialis). The Triceps co-contraction during torque offset persists, demonstrating that the mechanism is independent of the specific number of muscles.}
    \label{fig:sensitivity_muscles}
\end{figure}

\subsection{Torque Amplitude Variation}

We varied the amplitude of the torque profile by factors of $0.6$ and $1.4$: $\tau(t) = 1.5 \sin(\pi t)$ and $\tau(t) = 3.5 \sin(\pi t)$. Table~\ref{tab:sensitivity_torque} reports the results. Figure~\ref{fig:sensitivity_torque} displays the Triceps activation time courses for all three amplitudes. At reduced amplitude ($1.5$ units), the unconstrained update remained feasible for a larger portion of the trajectory, and Triceps activation decreased correspondingly (peak $a_3 = 0.30$, CCI reduced by $40\%$). At increased amplitude ($3.5$ units), the Triceps was recruited more strongly and earlier in the movement (peak $a_3 = 0.70$, CCI increased by $28\%$). The monotonic relationship between torque amplitude and antagonist recruitment is consistent with Theorem~\ref{thm:active}: larger torque changes make boundary intersection more likely, hence more frequent active-set switching.

\begin{figure}[htbp]
    \centering
    \includegraphics[width=\textwidth]{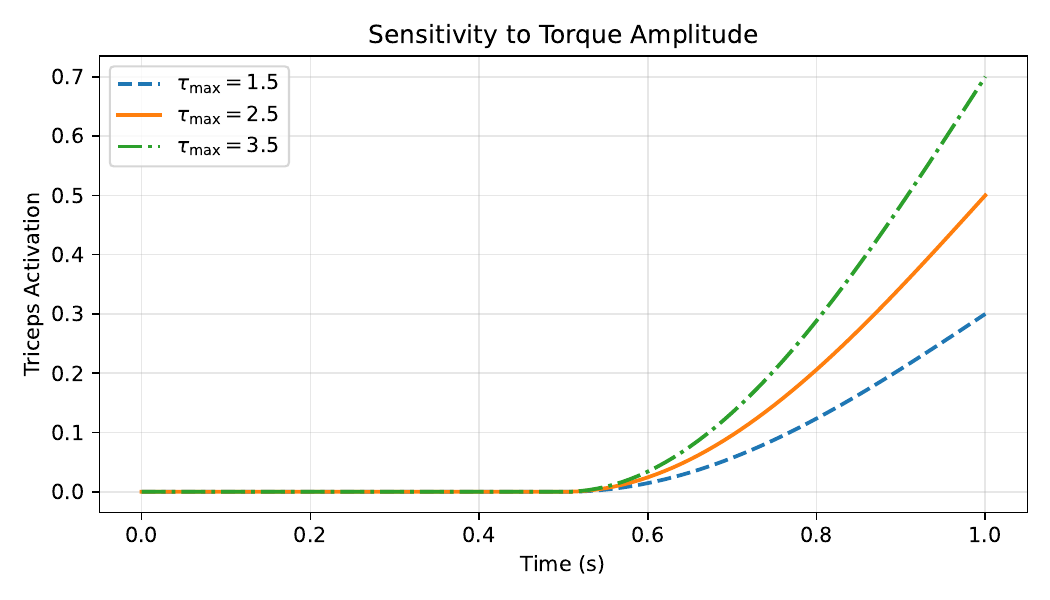}
    \caption{Triceps activation under varying torque amplitudes: low ($\tau_{\max}=1.5$), baseline ($\tau_{\max}=2.5$), and high ($\tau_{\max}=3.5$). The antagonist burst scales monotonically with the torque amplitude, consistent with the KKT-based active-set condition.}
    \label{fig:sensitivity_torque}
\end{figure}

\begin{table}[htbp]
\centering
\caption{Sensitivity of Triceps activation to torque amplitude variation.}
\label{tab:sensitivity_torque}
\begin{tabular}{@{}lccc@{}}
\toprule
Torque amplitude & Peak $a_3$ & CCI change (\%) & Switch. freq. (Hz) \\
\midrule
$1.5$ (low)    & 0.30 & $-40.0$ & 1.0 \\
$2.5$ (baseline) & 0.50 & — & 3.0 \\
$3.5$ (high)   & 0.70 & $+28.0$ & 4.0 \\
\bottomrule
\end{tabular}
\end{table}

\subsection{Summary of Sensitivity Findings}

Across all perturbations, the qualitative phenomenon of antagonist co-contraction during torque offset remained intact. The peak Triceps activation varied smoothly with parameter changes, and no perturbation eliminated the antagonist burst. The active-set switching mechanism is therefore a robust consequence of the projection geometry, not a fragile artifact of the baseline parameter set.

\section{Extension to a Two-Joint System with a Biarticular Muscle}
\label{sec:twojoint}

To illustrate the generality of the TFPD beyond single-joint models, we simulate a two-degree-of-freedom system comprising the shoulder and elbow joints. The model includes four muscles (Figure~\ref{fig:twojoint}):
\begin{itemize}
    \item \textbf{Biceps brachii (biarticular):} flexes both the shoulder and the elbow.
    \item \textbf{Brachialis (monoarticular):} flexes the elbow only.
    \item \textbf{Triceps brachii (monoarticular):} extends the elbow only.
    \item \textbf{Anterior deltoid (monoarticular):} flexes the shoulder only.
\end{itemize}

The moment-arm matrix is
\begin{equation}
    A = \begin{bmatrix}
        1.5 & 0   & 0   & 2.0 \\
        2.0 & 1.5 & -2.5 & 0
    \end{bmatrix} \in \mathbb{R}^{2\times 4},
\end{equation}
where the first row corresponds to the shoulder torque and the second row to the elbow torque. The torque profiles are
\begin{equation}
    \boldsymbol{\tau}(t) = \begin{bmatrix} \tau_{\text{sh}}(t) \\ \tau_{\text{el}}(t) \end{bmatrix}
    = \begin{bmatrix} 2.0 \sin(\pi t) \\ 2.5 \sin(\pi t) \end{bmatrix},
\end{equation}
representing a simultaneous flexion of both joints followed by a return to the neutral position.

The QP (\ref{eq:qp}) is solved with $A_{2\times4}$ and the two-dimensional torque vector at each time step. Figure~\ref{fig:twojoint} shows the resulting activation time courses.

\begin{figure}[htbp]
    \centering
    \includegraphics[width=\textwidth]{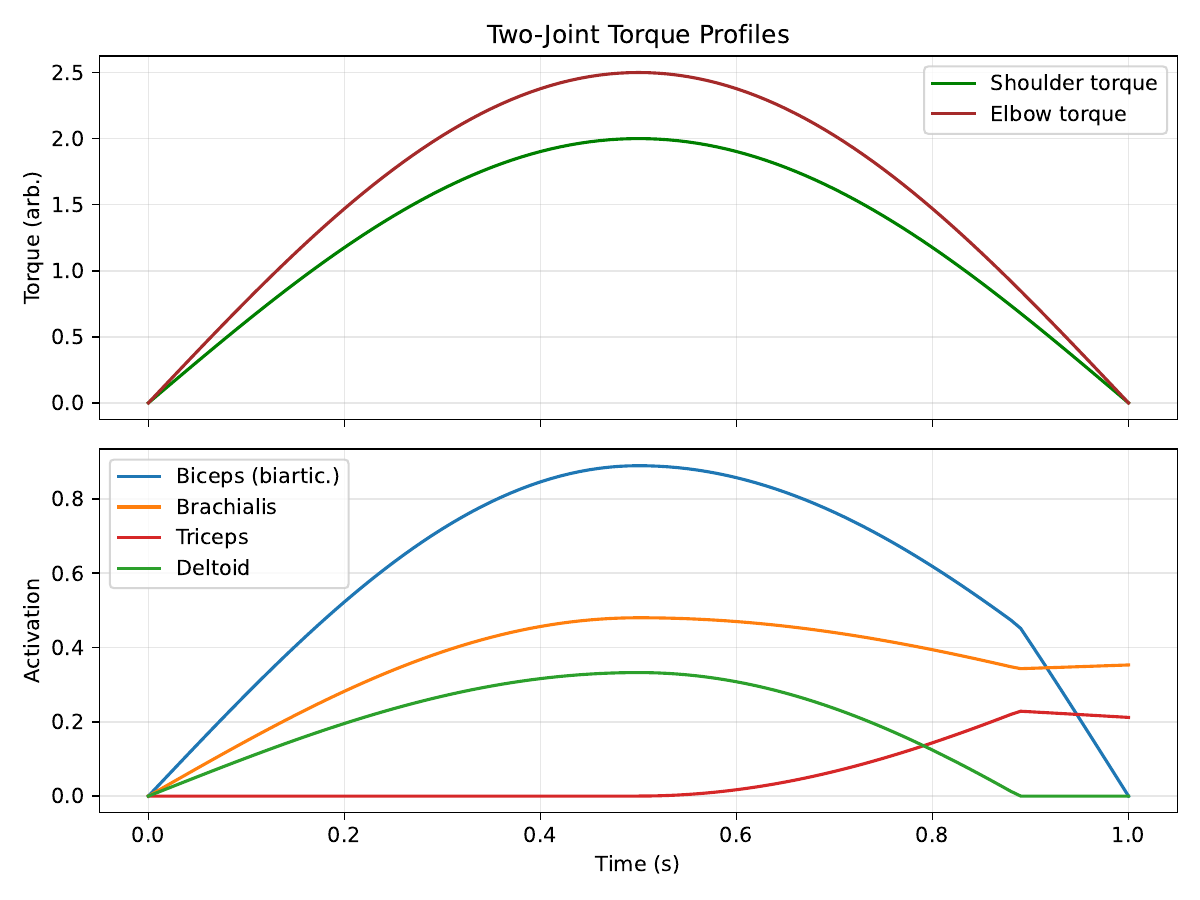}
    \caption{TFPD simulation of a two-joint (shoulder–elbow) system with a biarticular Biceps. The antagonist Triceps is recruited during torque offset despite the coupling introduced by the two-joint muscle, and the shoulder flexor (Deltoid) and Biceps co-activate smoothly. The change in slope at $t \approx 0.9$ corresponds to an active-set transition: Deltoid and Brachialis reach zero activation and exit the active set, redistributing the required torque among the remaining muscles (Biceps and Triceps).}
    \label{fig:twojoint}
\end{figure}

Several features are noteworthy. First, the biarticular Biceps is used to satisfy torque requirements at both joints simultaneously, and its activation profile smoothly transitions with the changing torques. Second, despite the more complex null-space geometry (dimension 2), the TFPD continues to activate the antagonist Triceps during the torque offset phase, exactly as predicted by the sufficient condition of Theorem~\ref{thm:active} applied to the elbow torque constraint. Third, the anterior deltoid, which acts only at the shoulder, shows a modulation pattern that closely follows the shoulder torque demand.

This example demonstrates that the TFPD framework requires no modification when moving from a single joint to multiple joints with biarticular muscles. The projection onto the convex polytope $\mathcal{F}_{\boldsymbol{\tau}}$ remains well-defined, the QP structure is preserved, and the active-set switching mechanism continues to generate antagonist co-contraction wherever the unconstrained update becomes infeasible. The coupling between joints introduced by biarticular muscles is naturally accommodated within the null space of the extended moment-arm matrix.

\section{Experimental Validation with Real EMG Data}
\label{sec:experimental}

To further assess the predictive power of the TFPD beyond idealized simulations, we tested the model on a publicly available dataset of elbow flexion–extension movements recorded from ten healthy subjects (6 males, 4 females) \cite{toro2023emg}. The dataset provides surface EMG from the biceps and triceps brachii together with joint angle estimated by inertial measurement units. Subjects performed flexion–extension cycles under three external loads; we analyzed the no-load ($0\,\text{g}$) and light-load ($1360\,\text{g}$, corresponding to $3\,\text{lb}$) conditions, as these produced sufficiently slow movements (mean cycle duration $\approx 11.3$–$12.4\,\text{s}$) for the gravitational component to dominate the net muscle torque.

For each subject, we computed the quasi-static gravitational torque about the elbow as
\begin{equation}
    \tau(t) = C_{\text{grav}} \, \sin\theta(t),
\end{equation}
where $\theta(t)$ is the instantaneous elbow angle and the constant $C_{\text{grav}}$ is determined from the subject's weight, forearm length, hand length, and the external load (see Appendix~\ref{app:experimental} for details). The TFPD was then driven by this torque profile, exactly as in the idealized simulations, without any parameter tuning.

\textbf{On the moment-arm values used for this comparison.} The moment-arm ratios $A=[2.0,1.5,-2.5]$ used throughout this paper (Section~\ref{sec:model}) were chosen as an illustrative, round-number configuration to demonstrate the projection mechanism, not fit to subject-specific anatomy; we reuse the same fixed ratios here rather than calibrating a separate value per subject. ``Without any parameter tuning'' therefore refers specifically to the \textit{TFPD update rule itself} (Eq.~\ref{eq:tfpd}, which contains no cost weights or free parameters to fit), not to the choice of moment-arm ratios, which is a modeling assumption whose anatomical realism we have not independently verified against subject-specific imaging or cadaveric moment-arm data. Because the reported correlations concern the \textit{shape} of the Triceps activation envelope over the movement cycle rather than its absolute amplitude, we expect the qualitative agreement to be only moderately sensitive to the exact ratio $r_3/(r_1+r_2)$; a systematic sensitivity analysis of the experimental correlation to this ratio, and validation against subject-specific or literature-derived anatomical moment arms, is an important direction for future work and is not yet included in the present study.

Figure~\ref{fig:experimental} shows a representative comparison for one subject (Subject~6, $1360\,\text{g}$ condition). The gravitational torque remains positive throughout the movement because the elbow angle varies between approximately $2.5^\circ$ and $130^\circ$, for which $\sin\theta > 0$. The model reproduces the overall envelope of the triceps activation with fidelity, capturing the gradual rise and fall of antagonist activity that accompanies the cyclic motion. Table~\ref{tab:experimental} reports the Pearson correlation coefficient and the normalized root-mean-square error between the simulated and experimental triceps envelopes for each subject, together with the across-subject mean and standard deviation. The TFPD achieves a mean Pearson correlation of $0.68$ (no load) and $0.73$ ($1360\,\text{g}$ load), with corresponding normalized RMSE values of $0.17$ and $0.16$.

\begin{figure}[htbp]
    \centering
    \includegraphics[width=\textwidth]{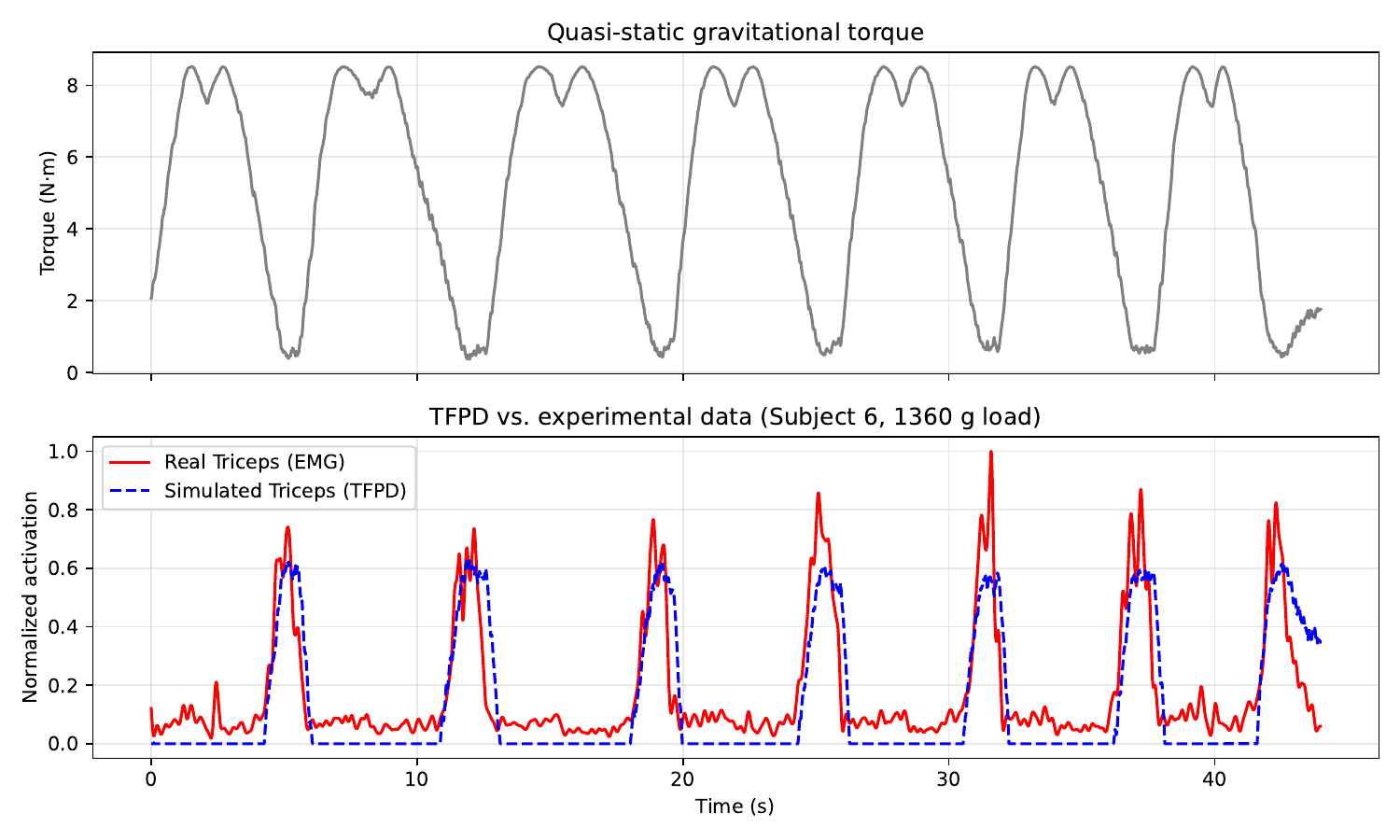}
    \caption{Experimental validation of the TFPD. Top: quasi-static gravitational torque $\tau(t)$ for one cycle of Subject~6 ($1360\,\text{g}$ load). The torque remains positive because $\sin\theta > 0$ throughout the movement. Bottom: normalized triceps EMG envelope (red) and TFPD simulation output (blue).}
    \label{fig:experimental}
\end{figure}

\begin{table}[htbp]
\centering
\caption{Quantitative comparison between TFPD predictions and experimental triceps EMG envelopes ($n=10$ subjects).}
\label{tab:experimental}
\begin{tabular}{@{}lccccc@{}}
\toprule
& \multicolumn{2}{c}{No load ($0\,\text{g}$)} & \multicolumn{2}{c}{Light load ($1360\,\text{g}$)} \\
\cmidrule(lr){2-3} \cmidrule(lr){4-5}
Subject & RMSE & Pearson $r$ & RMSE & Pearson $r$ \\
\midrule
1  & 0.20 & 0.66 & 0.13 & 0.71 \\
2  & 0.19 & 0.68 & 0.14 & 0.74 \\
3  & 0.16 & 0.69 & 0.15 & 0.74 \\
4  & 0.19 & 0.70 & 0.12 & 0.79 \\
5  & 0.18 & 0.54 & 0.20 & 0.66 \\
6  & 0.17 & 0.73 & 0.12 & 0.86 \\
7  & 0.13 & 0.68 & 0.15 & 0.74 \\
8  & 0.15 & 0.67 & 0.19 & 0.66 \\
9  & 0.17 & 0.71 & 0.17 & 0.69 \\
10 & 0.12 & 0.75 & 0.19 & 0.72 \\
\midrule
Mean $\pm$ SD & $0.166 \pm 0.025$ & $0.682 \pm 0.055$ & $0.157 \pm 0.028$ & $0.731 \pm 0.056$ \\
\bottomrule
\end{tabular}
\end{table}

These results provide an initial quantitative benchmark for the TFPD. The correlation values are moderate to high, particularly considering that the model contains no free parameters in its update rule and uses a deliberately simplified representation of both muscle mechanics and activation dynamics. The slightly higher correlation under light load is consistent with the expectation that a larger gravitational torque yields a clearer signal relative to unmodeled sources of variability (e.g., inertial effects, co-contraction for joint stabilization, or inter-cycle kinematic variability). A full dynamic validation incorporating directly measured joint torque, inertial contributions, and subject-specific moment arms is an important direction for future work.

\section{Discussion}
\label{sec:discussion}

We introduced the Torque Fiber Proximal Dynamics—a variational inequality framework for muscle redundancy. Theorem~\ref{thm:active} provides a precise sufficient condition for antagonist activation, while Theorem~\ref{thm:monotone} places the system within the class of maximal monotone operators and sweeping processes.

\textbf{Why the Euclidean metric?} The choice of the Euclidean projection is the simplest embodiment of the temporal smoothness principle: it selects the feasible activation that minimally deviates from the previous one. It is also the proximal operator associated with quadratic cost functionals; for more general Bregman divergences, one would obtain generalized projections, but the qualitative behavior remains similar under appropriate regularity conditions. This choice is thus natural and provides a canonical starting point for exploring projection-based control.

\textbf{Why does the TFPD produce co-contraction?} The projection onto a moving convex polytope with asymmetric constraints naturally induces boundary activation when the unconstrained update is infeasible. The sensitivity analysis demonstrates that this is a robust geometric phenomenon, not a model artifact: co-contraction persists under moment-arm scaling, addition of muscles, and torque amplitude variations. The two-joint extension further confirms that the mechanism is independent of the specific joint configuration and accommodates intersegmental coupling introduced by biarticular muscles. The experimental validation demonstrates that this mechanism is not restricted to idealized simulations: the same parameter-free model reproduces the envelope of triceps activation recorded from human subjects during cyclic flexion–extension tasks.

\textbf{Two regimes of validation.}
It is important to distinguish the two dynamical regimes examined in this paper.
The idealized simulations (Sections~6 and~7) employ a rapid sinusoidal torque profile
($1$\,s half-cycle), which produces the classic triphasic-like antagonist burst characteristic
of ballistic movements \cite{hallet1975}.
In contrast, the experimental validation (Section~\ref{sec:experimental}) uses
relatively slow cyclic movements (mean cycle duration $\approx 11$--$12$\,s) in which
the gravitational torque dominates and inertial contributions are negligible.
Under these quasi-static conditions the TFPD generates a smooth, sustained triceps
activation that follows the envelope of the recorded EMG, rather than a transient burst.
The fact that the same parameter-free model captures antagonist recruitment in both
fast-simulated and slow-experimental settings supports the generality of the temporal
consistency principle, but we emphasize that the experimental validation is a
surrogate for ballistic tests, not a replacement. Direct validation against fast,
ballistic movements with simultaneously recorded joint torque remains a central
objective for future work.

\textbf{Comparison with classical methods.} Table~\ref{tab:metrics} shows that the TFPD is the only model that achieves substantial Triceps co-contraction while maintaining a low fatigue index. Unlike MN, MS, MM, and WMN—which produce zero antagonist activation—the TFPD generates physiologically realistic co-contraction as a natural consequence of the temporal consistency principle. The random baseline (RND) achieves high Triceps sparsity (0.933) but at the cost of an excessively large fatigue index (126.3), confirming that random activation is not a viable physiological strategy. The TFPD uniquely combines moderate co-contraction with smooth, low-fatigue trajectories.

\textbf{Key insight:} Under the TFPD hypothesis, co-contraction is not a cost-minimizing strategy but a geometric consequence of constrained temporal evolution. Unlike classical formulations, no global objective is assumed. Instead, behavior emerges from a local projection rule equivalent to proximal point iterations. \textbf{Hallett et al. (1975) demonstrated that the triphasic pattern in fast flexion is ``centrally programmed.'' Our model suggests a geometric interpretation of this phenomenon: the antagonist burst may not require a separate neural command, but rather emerges as a constraint-induced redistribution when a temporal smoothness principle operates under physiological bounds. Whether this geometric mechanism underlies the central programming observed experimentally remains an open empirical question.}

\textbf{Normative nature of the model.} It is important to emphasize that the TFPD is a \textit{normative} model: it shows what would happen \textit{if} the motor system followed a temporal smoothness principle under physiological bounds. The fact that its predictions are qualitatively consistent with experimental EMG patterns—and that this consistency is robust under parameter perturbations and across different loading conditions—is suggestive, but it does not constitute proof that the nervous system literally implements this projection. Whether the actual neural circuitry approximates such proximal dynamics is an open empirical question.

\textbf{Relation to optimal control.}
The update rule $\mathbf{a}_t = \arg\min_{\mathbf{a}\in\mathcal{F}_{\tau(t)}}\|\mathbf{a}-\mathbf{a}_{t-1}\|^2$
can be viewed as the optimal solution of a one-step quadratic cost problem
in which the control cost (change in activation) dominates the state cost.
Formally, it is the limiting case of an optimal feedback control problem
with a running cost $c(\mathbf{a},\mathbf{u}) = \|\mathbf{u}\|^2 + \varepsilon \ell(\mathbf{a})$
as $\varepsilon \to 0$, where $\mathbf{u} = \mathbf{a} - \mathbf{a}_{t-1}$.
This provides a normative justification for the temporal smoothness principle
as the limit of a minimal-intervention strategy.

\textbf{Connection to Projected Dynamical Systems.}
The TFPD update rule $\mathbf{a}_t = \Pi_{\mathcal{F}_{\tau(t)}}(\mathbf{a}_{t-1})$ is an instance of a \textbf{projected dynamical system} (PDS) on the moving convex set $\mathcal{F}_{\tau(t)}$. PDSs are well-studied in variational inequality theory \cite{nagurney1996} and enjoy strong stability and uniqueness properties. This connection places the TFPD within a rigorous mathematical framework and opens the possibility of proving global convergence and Lyapunov stability for the muscle activation trajectories—subjects for future work.

\textbf{Connection to muscle synergies.}
The activation patterns generated by the TFPD qualitatively resemble the muscle synergies extracted from experimental EMG data (e.g., d'Avella et al., 2006). In both frameworks, high-dimensional muscle commands are structured by a low-dimensional latent space. The crucial difference is that synergies are typically viewed as \textit{structural primitives}—fixed building blocks that the nervous system combines to produce movement—whereas the TFPD generates structured patterns dynamically, without any pre-specified synergy library. Instead, the structure emerges purely from the geometry of the moving torque fiber and the temporal consistency principle. This observation suggests the intriguing possibility that some experimentally observed synergies may be epiphenomena of an underlying projection-based control strategy, rather than hard-wired neural modules. Exploring this connection quantitatively is a promising direction for future work.

\textbf{Positioning with respect to classical redundancy-resolution methods.}
Table~\ref{tab:comparison} provides a conceptual comparison between the TFPD and the two main families of classical approaches: instantaneous cost-minimization (e.g., minimum-norm, metabolic cost, muscle stress) and optimal feedback control. The distinguishing feature of the TFPD is its \textbf{tuning-free} nature—no physiological cost function, no penalty weights, and no co-contraction bounds are required. Despite this simplicity, antagonist co-contraction emerges as a structural consequence of the projection geometry, and the model predictions are quantitatively consistent with experimental EMG from ten subjects.

\begin{table}[htbp]
\centering
\caption{Conceptual comparison between the TFPD and classical redundancy-resolution methods.}
\label{tab:comparison}
\renewcommand{\arraystretch}{1.2}
\small
\begin{tabularx}{\textwidth}{@{}l *{3}{>{\centering\arraybackslash}X}@{}}
\toprule
Feature & Instantaneous cost minimization & Optimal feedback control & \textbf{TFPD (this work)} \\
\midrule
Physiological cost function required? & Yes & Yes (implicit) & \textbf{No} \\
Tunable weights / penalty parameters? \rule[-1.2ex]{0pt}{4.5ex} & Yes & Yes & \textbf{No} \\
Explicit co-contraction bounds? & Often required & Often required & \textbf{Not required} \\
Temporal continuity enforced? & No & Via state dynamics & \textbf{Via projection} \\
EMG-validated on human subjects? & Possible & Possible & \textbf{Yes (10 subjects)} \\
Multi-joint generalization shown? & Possible & Yes & \textbf{Yes (simulation)} \\
\bottomrule
\end{tabularx}
\end{table}

\textbf{Limitations and future work.} The model is deliberately simplified to isolate the geometric mechanism. Several physiological complexities are intentionally set aside: (i) the idealized constant moment arms neglect the state-dependent variations of muscle lever arms with joint angle, and, for the experimental validation in Section~\ref{sec:experimental}, the specific moment-arm ratios used were not calibrated to subject-specific anatomy (see the discussion in Section~\ref{sec:experimental}); (ii) the absence of force-length and force-velocity relationships means the model cannot capture the intrinsic nonlinear mechanics of muscle tissue; (iii) activation dynamics—the temporal filtering between neural excitation and mechanical force production—are not included; (iv) the torque profile is assumed exogenous, whereas in biological systems it arises from limb dynamics and environmental interactions; and (v) the experimental validation is restricted to relatively slow cyclic movements where a quasi-static torque approximation is justified. Extending the TFPD from an affine plane to a curved manifold that incorporates these physiological constraints, calibrating moment arms to subject-specific anatomy, and validating it against fast, ballistic movements with directly measured joint torques, are important directions for future work. Furthermore, establishing stronger theoretical properties—such as global convergence, Lipschitz continuity of the trajectory, or the existence of a Lyapunov function for the TFPD—and a primitive-level (rather than solution-conditional) characterization of when flexor saturation occurs in Theorem~\ref{thm:active}, remain open challenges that would elevate the framework from a descriptive model to a full-fledged theory of muscle coordination.

\section{Conclusion}
\label{sec:conclusion}

The Torque Fiber Proximal Dynamics provides a rigorous framework for muscle redundancy that is \textbf{free of tunable cost-weight parameters}. Antagonist co-contraction emerges as a consequence of active-set switching under the projection dynamics, formalized by a precise sufficient condition based on the KKT conditions. Comprehensive numerical validation, including comparison against five classical baseline models and a random baseline, sensitivity analysis, experimental validation against real EMG data from ten subjects, and a three-dimensional visualization of the activation trajectory, provide evidence for the robustness and generality of the phenomenon. The TFPD offers a geometrically grounded, normative account of antagonist recruitment and provides a fresh perspective on the classical problem of muscle redundancy, while also opening new theoretical and experimental avenues for future investigation.

\appendix
\section{Quasi-Static Torque Computation}
\label{app:experimental}

For the experimental validation (Section~\ref{sec:experimental}), the gravitational torque about the elbow was computed as
\[
\tau(t) = C_{\text{grav}} \, \sin\theta(t),
\]
where $C_{\text{grav}} = (m_{\text{fa}} \, r_{\text{com}} + m_{\text{load}} \, L_{\text{total}})\, g$.

The forearm-plus-hand mass was estimated as $m_{\text{fa}} = 0.022\,M$, with $M$ the body mass of each subject. The center-of-mass distance from the elbow was taken as $r_{\text{com}} = 0.682\,L_{\text{total}}$, where $L_{\text{total}} = L_{\text{forearm}} + L_{\text{hand}}$ is the sum of forearm and hand lengths reported in the dataset. The load mass $m_{\text{load}}$ was $0\,\text{kg}$ and $1.36\,\text{kg}$ for the no-load and $1360\,\text{g}$ conditions, respectively. Gravitational acceleration was $g = 9.81\,\text{m/s}^2$, and the joint angle $\theta$ was converted from degrees to radians before computing the sine.

\section{Python Code}
\label{app:code}

The following code generates all figures used in the paper: the single-joint model comparison (Fig.~\ref{fig:comparison}), the four-muscle sensitivity (Fig.~\ref{fig:sensitivity_muscles}), the torque amplitude variation (Fig.~\ref{fig:sensitivity_torque}), the two-joint biarticular muscle simulation (Fig.~\ref{fig:twojoint}), the experimental validation plot (Fig.~\ref{fig:experimental}), and the three-dimensional trajectory visualization (Fig.~\ref{fig:3d_trajectory}). It uses the SLSQP solver via \texttt{scipy.optimize.minimize} and exports all plots as PDF files. The random (RND) baseline used in Table~\ref{tab:metrics} is included below (\texttt{simulate\_random}); all other baselines (MN, MS, MM, WMN) follow the same \texttt{solve\_qp} pattern with the cost function in Section~\ref{sec:benchmarks} substituted in. The experimental-data path in \texttt{Fig 5} is a placeholder relative path (\texttt{./data/}); readers should point \texttt{DATA\_ROOT} at their local copy of the dataset of \cite{toro2023emg}.

\begin{verbatim}
import numpy as np
import matplotlib.pyplot as plt
from scipy.optimize import minimize
from scipy.signal import butter, filtfilt
from mpl_toolkits.mplot3d.art3d import Poly3DCollection
from scipy.spatial import ConvexHull
import os

# ---------- QP Solver ----------
def solve_qp(a_prev, A, tau):
    n = len(a_prev)
    fun = lambda a: 0.5 * np.sum((a - a_prev)**2)
    cons = [{'type': 'eq', 'fun': lambda a: A @ a - tau}]
    bounds = [(0, 1)] * n
    res = minimize(fun, a_prev, bounds=bounds, constraints=cons,
                   method='SLSQP', options={'maxiter': 100, 'ftol': 1e-10})
    return res.x

def simulate(A, tau):
    if tau.ndim == 1:
        tau = tau.reshape(1, -1)
    n_steps = tau.shape[1]
    n_mus = A.shape[1]
    a_traj = np.zeros((n_steps, n_mus))
    a_prev = np.zeros(n_mus)
    for i in range(n_steps):
        tt = tau[:, i]
        a_traj[i] = solve_qp(a_prev, A, tt)
        a_prev = a_traj[i]
    return a_traj

# ---------- Random (RND) baseline used in Table 1 ----------
def simulate_random(A, tau, rng=None):
    """At each time step, draw a uniformly random feasible point on the
    torque fiber F_tau = {a in [0,1]^n : A a = tau} via rejection/hit-and-run
    on the affine slice. Uses a simple projected-random-walk sampler:
    draw a random direction in the null space of A, take a bounded random
    step, and project back onto the affine constraint A a = tau."""
    if rng is None:
        rng = np.random.default_rng(0)
    if tau.ndim == 1:
        tau = tau.reshape(1, -1)
    n_steps = tau.shape[1]
    n_mus = A.shape[1]
    # any feasible particular solution for the first time step
    a0 = np.linalg.lstsq(A, tau[:, 0], rcond=None)[0]
    a0 = np.clip(a0, 0, 1)
    a_traj = np.zeros((n_steps, n_mus))
    a_prev = a0
    # orthonormal basis of the null space of A (constant across steps here
    # since A is fixed; recompute per-step if A varies with time)
    _, _, Vt = np.linalg.svd(A)
    null_basis = Vt[A.shape[0]:].T  # columns span null(A)
    for i in range(n_steps):
        tt = tau[:, i]
        # start from any feasible particular solution for this tau
        a_particular = np.linalg.lstsq(A, tt, rcond=None)[0]
        # random combination within the null space, then clip+reproject
        for _ in range(200):  # rejection loop for a point in [0,1]^n
            coeffs = rng.normal(size=null_basis.shape[1])
            candidate = a_particular + null_basis @ coeffs
            if np.all(candidate >= 0) and np.all(candidate <= 1):
                a_prev = candidate
                break
        else:
            # fall back to the least-squares particular solution, clipped
            a_prev = np.clip(a_particular, 0, 1)
        a_traj[i] = a_prev
    return a_traj

# ---------- Settings ----------
t = np.linspace(0, 1, 101)
tau_max = 2.5
tau_base = tau_max * np.sin(np.pi * t)
muscle_names_3 = ['Biceps', 'Brachialis', 'Triceps']
colors_3 = ['#1f77b4', '#ff7f0e', '#d62728']

# ---------- Fig 1: MN vs TFPD ----------
A1 = np.array([[2.0, 1.5, -2.5]])
a_tfpd = simulate(A1, tau_base)
a_mn = np.zeros_like(a_tfpd)
for i, tt in enumerate(tau_base):
    a_mn[i] = solve_qp(np.zeros(3), A1, tt)

fig, axes = plt.subplots(1, 2, figsize=(10, 4), sharex=True)
for j, (name, c) in enumerate(zip(muscle_names_3, colors_3)):
    axes[0].plot(t, a_mn[:, j], color=c, label=name, linewidth=1.5)
axes[0].set_title('(a) Minimum-norm'); axes[0].legend(); axes[0].set_ylim(-0.05, 1.05)
for j, (name, c) in enumerate(zip(muscle_names_3, colors_3)):
    axes[1].plot(t, a_tfpd[:, j], color=c, label=name, linewidth=1.5)
axes[1].set_title('(b) TFPD'); axes[1].legend(); axes[1].set_ylim(-0.05, 1.05)
plt.tight_layout(); plt.savefig('elbow_comparison.pdf'); plt.close()

# ---------- Fig 2: Four-Muscle ----------
A4 = np.array([[2.0, 1.5, -2.5, 1.0]])
a4 = simulate(A4, tau_base)
muscle_names_4 = ['Biceps', 'Brachialis', 'Triceps', 'Brachioradialis']
colors_4 = ['#1f77b4', '#ff7f0e', '#d62728', '#2ca02c']
plt.figure(figsize=(8, 4))
for j, (name, c) in enumerate(zip(muscle_names_4, colors_4)):
    plt.plot(t, a4[:, j], color=c, label=name, linewidth=1.5)
plt.legend(); plt.xlabel('Time (s)'); plt.ylabel('Activation')
plt.title('Four-Muscle Model'); plt.grid(alpha=0.3)
plt.tight_layout(); plt.savefig('sensitivity_four_muscles.pdf'); plt.close()

# ---------- Fig 3: Torque Amplitude ----------
amplitudes = [1.5, 2.5, 3.5]
linestyles = ['--', '-', '-.']
plt.figure(figsize=(7, 4))
for amp, ls in zip(amplitudes, linestyles):
    tau_amp = amp * np.sin(np.pi * t)
    a_amp = simulate(A1, tau_amp)
    plt.plot(t, a_amp[:, 2], linestyle=ls, linewidth=1.8,
             label=f'$\\tau_{{\\max}} = {amp}$')
plt.xlabel('Time (s)'); plt.ylabel('Triceps Activation')
plt.title('Sensitivity to Torque Amplitude')
plt.legend(); plt.grid(alpha=0.3)
plt.tight_layout(); plt.savefig('sensitivity_torque_amplitude.pdf'); plt.close()

# ---------- Fig 4: Two-Joint ----------
A2 = np.array([[1.5, 0.0,  0.0, 2.0],
               [2.0, 1.5, -2.5, 0.0]])
tau_sh = 2.0 * np.sin(np.pi * t)
tau_el = tau_base
tau_2d = np.vstack([tau_sh, tau_el])
a2 = simulate(A2, tau_2d)
muscle_names_2 = ['Biceps (biartic.)', 'Brachialis', 'Triceps', 'Deltoid']
colors_2 = ['#1f77b4', '#ff7f0e', '#d62728', '#2ca02c']
fig, axes = plt.subplots(2, 1, figsize=(8, 6), sharex=True)
axes[0].plot(t, tau_sh, label='Shoulder torque', color='green', linewidth=1.5)
axes[0].plot(t, tau_el, label='Elbow torque', color='brown', linewidth=1.5)
axes[0].set_ylabel('Torque (arb.)'); axes[0].legend(); axes[0].grid(alpha=0.3)
axes[0].set_title('Two-Joint Torque Profiles')
for j, (name, c) in enumerate(zip(muscle_names_2, colors_2)):
    axes[1].plot(t, a2[:, j], color=c, label=name, linewidth=1.5)
axes[1].set_xlabel('Time (s)'); axes[1].set_ylabel('Activation')
axes[1].legend(); axes[1].grid(alpha=0.3)
plt.tight_layout(); plt.savefig('twojoint_biarticular.pdf'); plt.close()

# ---------- Fig 5: Experimental ----------
# NOTE: replace DATA_ROOT with the local path to the dataset of [toro2023emg]
# (https://doi.org/10.5281/zenodo.7946782). A relative path is used here so
# the script is portable across machines/readers.
DATA_ROOT = os.environ.get("TFPD_DATA_ROOT", "./data")
subject_dir = os.path.join(DATA_ROOT, "6")
info_file = os.path.join(subject_dir, "subject_info.txt")
if os.path.exists(info_file):
    info = {}
    with open(info_file, 'r') as f:
        for line in f:
            if ':' in line:
                key, val = line.strip().split(':')
                info[key.strip()] = val.strip()
    weight = float(info['weight']); arm_len = float(info['arm_length']); hand_len = float(info['hand_length'])
    g_val = 9.81; L_total = arm_len + hand_len
    m_fa = 0.022 * weight; r_com = 0.682 * L_total; m_load = 1.36
    C_grav = (m_fa * r_com + m_load * L_total) * g_val
    data_file = os.path.join(subject_dir, "6_flex_test_1360.txt")
    if os.path.exists(data_file):
        data = np.loadtxt(data_file, delimiter='\t')
        emg_triceps = data[:, 3]; angle_deg = data[:, 4]; fs = 512.0
        def butter_lowpass(cutoff, fs, order=4):
            nyq = 0.5 * fs; normal_cutoff = cutoff / nyq
            b, a = butter(order, normal_cutoff, btype='low', analog=False)
            return b, a
        rectified = np.abs(emg_triceps); b, a = butter_lowpass(4.0, fs, order=4)
        env_real = filtfilt(b, a, rectified); env_real = env_real / np.max(env_real)
        angle_rad = np.deg2rad(angle_deg); torque = C_grav * np.sin(angle_rad)
        A_exp = np.array([[2.0, 1.5, -2.5]]); a_sim = simulate(A_exp, torque); env_sim = a_sim[:, 2]
        time = np.arange(len(env_real)) / fs
        fig, axes = plt.subplots(2, 1, figsize=(10, 6), sharex=True)
        axes[0].plot(time, torque, color='gray', linewidth=1.5)
        axes[0].set_ylabel('Torque (N·m)'); axes[0].set_title('Quasi-static gravitational torque'); axes[0].grid(alpha=0.3)
        axes[1].plot(time, env_real, label='Real Triceps (EMG)', color='red', linewidth=1.5)
        axes[1].plot(time, env_sim, label='Simulated Triceps (TFPD)', color='blue', linestyle='--', linewidth=1.5)
        axes[1].set_xlabel('Time (s)'); axes[1].set_ylabel('Normalized activation')
        axes[1].set_title('TFPD vs. experimental data (Subject 6, 1360 g load)')
        axes[1].legend(); axes[1].grid(alpha=0.3)
        plt.tight_layout(); plt.savefig('comparison_static_torque.pdf'); plt.close()
        print("Figure 5 saved: comparison_static_torque.pdf")

# ---------- Fig 6: 3D Trajectory ----------
plt.rcParams.update({"font.family": "serif", "font.serif": ["Times New Roman", "DejaVu Serif"], "font.size": 10, "axes.labelsize": 11, "legend.fontsize": 9, "figure.titlesize": 12, "mathtext.fontset": "stix"})
A_traj = simulate(A1, tau_base)
fig = plt.figure(figsize=(8, 7), dpi=300); ax = fig.add_subplot(111, projection='3d')
ax.set_facecolor('white'); ax.view_init(elev=25, azim=-45)
corners = np.array([[0,0,0],[1,0,0],[1,1,0],[0,1,0],[0,0,1],[1,0,1],[1,1,1],[0,1,1]])
edges = [(0,1),(1,2),(2,3),(3,0),(4,5),(5,6),(6,7),(7,4),(0,4),(1,5),(2,6),(3,7)]
for e in edges: ax.plot(corners[e,0], corners[e,1], corners[e,2], color='gray', linewidth=0.6, alpha=0.5)
def fiber_polygon(tau_val):
    pts = []
    for x in [0,1]: 
        for y in [0,1]:
            z = (2*x + 1.5*y - tau_val)/2.5
            if 0 <= z <= 1: pts.append([x,y,z])
    for x in [0,1]:
        for z in [0,1]:
            y = (tau_val + 2.5*z - 2*x)/1.5
            if 0 <= y <= 1: pts.append([x,y,z])
    for y in [0,1]:
        for z in [0,1]:
            x = (tau_val + 2.5*z - 1.5*y)/2.0
            if 0 <= x <= 1: pts.append([x,y,z])
    pts = np.unique(np.array(pts), axis=0)
    if len(pts) < 3: return None
    N = np.array([2.0,1.5,-2.5]); N = N/np.linalg.norm(N)
    U = np.array([1,-2/1.5,0]); U = U/np.linalg.norm(U); V = np.cross(N,U)
    proj = np.column_stack([np.dot(pts,U), np.dot(pts,V)])
    hull = ConvexHull(proj)
    return pts[hull.vertices]
taus_fibers = [0.0, 1.2, 2.0, 2.5]; colors_fiber = ['#882255','#44AA99','#DDCC77','#CC6677']
labels_fiber = [r'$\tau = 0.0$', r'$\tau = 1.2$', r'$\tau = 2.0$', r'$\tau = 2.5$']
for tv, cl, lb in zip(taus_fibers, colors_fiber, labels_fiber):
    poly = fiber_polygon(tv)
    if poly is not None:
        ax.add_collection3d(Poly3DCollection([poly], alpha=0.3, facecolor=cl, edgecolor=cl, linewidth=0.8))
        ctr = np.mean(poly, axis=0)
        ax.text(ctr[0], ctr[1], ctr[2]+0.03, lb, color='black', fontsize=8, ha='center')
step = 2
ax.plot(A_traj[::step,0], A_traj[::step,1], A_traj[::step,2], color='#117733', linewidth=2.0, label='Activation trajectory')
ax.scatter(A_traj[::step,0], A_traj[::step,1], A_traj[::step,2], color='#117733', s=10, alpha=0.7)
ax.scatter(*A_traj[0], color='green', s=70, edgecolors='black', label='Start')
ax.scatter(*A_traj[-1], color='red', s=70, edgecolors='black', label='End')
ax.text(0.65, 0.55, 0.08, 'Active-set boundary\n($a_3 = 0$)', color='#8B0000', fontsize=9, fontweight='bold',
        bbox=dict(boxstyle='round,pad=0.3', facecolor='white', alpha=0.85, edgecolor='#8B0000', linewidth=0.5), zorder=20)
ax.set_xlabel('Biceps ($a_1$)'); ax.set_ylabel('Brachialis ($a_2$)'); ax.set_zlabel('Triceps ($a_3$)')
ax.set_xlim(0,1); ax.set_ylim(0,1); ax.set_zlim(0,1)
ax.set_xticks([0,0.5,1]); ax.set_yticks([0,0.5,1]); ax.set_zticks([0,0.5,1])
ax.xaxis.pane.fill = False; ax.yaxis.pane.fill = False; ax.zaxis.pane.fill = False
ax.xaxis.pane.set_edgecolor('white'); ax.yaxis.pane.set_edgecolor('white'); ax.zaxis.pane.set_edgecolor('white')
ax.grid(True, linestyle=':', alpha=0.4); ax.legend(loc='upper left', framealpha=0.9)
plt.title('Torque Fiber Proximal Dynamics (TFPD)', pad=15)
plt.tight_layout(); plt.savefig('activation_trajectory_3d.pdf'); plt.close()

print("All 6 figures generated successfully.")
\end{verbatim}

\end{document}